\newtheorem{theorem}{Theorem}
\newcommand{\ket}[1]{\left|#1\right\rangle}
\tiny\color{gray},
\begin{document}

\title{Digital Coherent-State QRNG Using System-Jitter Entropy via Random Permutation}

\author{Yurang (Randy) Kuang}
\email{rkuang@quantropi.com}
\affiliation{Quantropi Inc., Ottawa, Ontario K2T 0B8, Canada}

\date{\today}

\begin{abstract}
We present a fully digital framework that replicates the statistical behavior of coherent-state quantum random number generation (QRNG) by harnessing system timing jitter through random permutation processes. Our approach transforms computational timing variations from hardware and operating system sources into permutation dynamics that generate Poisson-distributed numbers, accurately reproducing the photon statistics of optical coherent states. The theoretical foundation is established by the \emph{Uniform Convergence Theorem}, which provides exponential convergence to uniformity under modular projection with rigorous error bounds. Extensive experimental validation across multiple parameter regimes and sample sizes up to $10^8$ bytes demonstrates exceptional performance: Shannon entropy approaching 7.999998 bits/byte and min-entropy exceeding 7.99 bits/byte, outperforming theoretical bounds at scale. The architecture inherently resists side-channel attacks through compound timing distributions and adaptive permutation behavior, while operating without classical cryptographic post-processing. Our results establish that coherent-state QRNG functionality can be entirely realized through classical computational processes, delivering mathematically provable uniformity and practical cryptographic security without quantum photonic hardware.

\textbf{Keywords:} quantum random number generation, coherent states, system timing jitter, permutation algorithms, Poisson statistics, entropy, cryptographic security
\end{abstract}

\maketitle

%\keywords{QRNG, coherent states, permutation, entropy, cryptography}

%\tableofcontents

\section{Introduction}

Quantum random number generation (QRNG) provides a foundational primitive for secure communication, scientific computing, and quantum cryptography \cite{ma2016,herrero2017}. The intrinsic unpredictability of quantum measurements offers an information-theoretic advantage over classical pseudorandom number generators, which remain vulnerable to algorithmic weaknesses, computational advances, and side-channel attacks \cite{ma2016}. Consequently, QRNGs have become indispensable in quantum key distribution (QKD) \cite{lo2014}, cryptographic protocol design \cite{portmann2022}, and high-precision Monte Carlo simulations \cite{robert2013}.

Diverse physical implementations have emerged, including single-photon detection \cite{jennewein2000}, vacuum-fluctuation measurements \cite{gabriel2010}, beam-splitter path randomness \cite{siswanto2017}, and phase-noise extraction \cite{qi2010}. Among these, coherent-state QRNGs stand out due to their well-characterized Poissonian photon statistics, practical experimental designs, and growing commercial deployment \cite{shi2020,xu2022}. A coherent state $|\alpha\rangle$, with complex amplitude $\alpha$, exhibits photon-number statistics
\[
P(n) = e^{-\mu}\frac{\mu^n}{n!}, \quad \mu = |\alpha|^2,
\]
providing a physically certified source of genuine randomness \cite{gerry2005,leonhardt2010}.

Recent advances have explored critical performance--complexity trade-offs: laser phase-noise extraction achieves multi-gigabit rates through dedicated digital signal processing (DSP) hardware \cite{Qi2010HighspeedQR}, chip-integrated chaotic oscillators enable compact scalable architectures \cite{Zhang2021ChipQRNG}, and vacuum-fluctuation homodyne systems provide robust entropy certification \cite{Wu2016VacuumQRNG}. Photon-counting implementations span minimalistic designs \cite{Huu2024SimpleQRNG} to physically certified coherent-state systems \cite{Shen2022CoherentGS} and photon-number-resolving detectors emphasizing security \cite{Nie2017PNRD_QRNG, Shi2023PhotonNumberQRNG}. Most recently, Wang \textit{et al.} demonstrated 21.1 bits per detection event using spatio-temporal photon statistics \cite{wang2025efficient}.

Despite these advances, conventional QRNGs face fundamental limitations: they require specialized quantum photonic hardware, depend on complex calibration and post-processing, and maintain a rigid quantum--classical boundary that assumes true randomness originates exclusively from quantum phenomena. These constraints limit accessibility, increase system costs, and complicate large-scale deployment.

This work introduces a unified framework that bridges the quantum-classical divide in randomness generation while preserving essential quantum statistical signatures through fully digital replication. Our approach achieves two key breakthroughs: First, we harness system timing jitter at hardware and OS levels as a physically measurable entropy source, embedding it within a random permutation process that reproduces Poisson-distributed statistics identical to coherent-state QRNGs. Second, we replace conventional post-processing with a modular projection operator $\hat{R}_M$, establishing the \emph{Poisson Modular Convergence Theorem} that guarantees exponential convergence from Poisson statistics to uniform outputs with mathematically quantifiable error bounds.

Building upon the Random Permutation Sorting System (RPSS) framework \cite{kuang2025statisticalquantummechanicsrandom,qpp-rng-sci-kuang-2025}, we adapt the architecture for coherent-state emulation with a crucial distinction: whereas prior RPSS implementations generated negative binomial statistics, here RPSS serves primarily as an entropy capture mechanism. A pseudo-Poisson generator produces integer counts representing the number operator $\hat{N}$ that determine permutation operations, while execution times $\hat{T}$—inherently sensitive to microarchitectural and OS-level timing fluctuations—function as unpredictable conjugate observables that recursively feed into subsequent generation steps. This conjugacy between the count observable $\hat{N}$ and timing observable $\hat{T}$ physically establishes a stochastic uncertainty relation \([\hat{N}, \hat{T}] \ne 0\) within the digital space, mimicking the fundamental uncertainty relations in quantum phase space and establishing the mathematical foundation for digital replication of coherent-state QRNG. The design ensures forward secrecy and continuous entropy refresh while preserving coherent-state statistical signatures through timing-based perturbation rather than direct distribution generation.

\textbf{Principal Contributions:}
\begin{itemize}
\item A unified mathematical framework establishing statistical equivalence between physical coherent-state QRNGs and their digital replication through Poisson statistics and modular projection
\item The Poisson Modular Convergence Theorem, providing rigorous exponential convergence bounds that guarantee certified uniformity with quantifiable error rates
\item An RPSS adaptation that preserves unpredictable behavior and forward secrecy while enabling digital replication of coherent-state statistics
\item Comprehensive experimental validation demonstrating statistical equivalence within $\epsilon < 10^{-4}$ across practical parameter regimes
\item A hybrid architecture supporting dynamic selection between physical and digital randomness sources according to application requirements
\end{itemize}

By challenging conventional quantum--classical distinctions, this work reframes randomness generation as an implementation choice rather than a fundamental boundary. Our framework broadens access to certified high-quality randomness while preserving rigorous mathematical guarantees across both photonic quantum hardware and classical computational platforms.

The remainder of this paper is structured as follows. Section~\ref{sec:theoretical_foundation} develops the theoretical foundation, covering coherent-state quantum optics, the Poisson Modular Convergence Theorem, and the RPSS framework. Section~\ref{sec:experimental_results} presents experimental validation across multiple parameter regimes, demonstrating both statistical fidelity and cryptographic quality. Section~\ref{sec:conclusion} discusses implications, limitations, and future research directions.

\section{Theoretical Foundation}
\label{sec:theoretical_foundation}

This work establishes a comprehensive theoretical framework demonstrating that the statistical behavior of quantum coherent-state random number generators can be precisely replicated in a fully classical digital system. The purpose of this section is twofold: (i) to elucidate why coherent-state physics naturally leads to Poisson-distributed randomness and subsequent uniform randomness via modular projection, and (ii) to explain how a digital system—driven by system timing jitter and controlled permutation processes—can reproduce the same statistical structure with mathematical rigor and physical fidelity. 

While the physics of coherent states is well understood, the central novelty of this work lies in demonstrating that the \textbf{certifiable random outputs} derived from them do not fundamentally depend on optical quantization, but rather emerge from the combination of Poisson-distributed stochasticity and a mechanism capable of absorbing physical noise while projecting it into a uniform discrete space. By identifying and replicating these essential ingredients, we establish that digital replication is not merely a heuristic approximation but a mathematically guaranteed equivalence, enabling quantum-grade randomness generation in environments where photonic hardware is unavailable, impractical, or undesirable.

The result is a unified perspective: quantum coherent-state QRNGs and their digital replicas are two implementations of the same statistical transformation pipeline. This equivalence is established theoretically in this section and validated experimentally through the digital implementation in Section~\ref{sec:experimental_results}. This section develops the shared mathematical foundation for certified randomness generation across both quantum and classical computational platforms.

\subsection{Coherent States: Quantum Foundations of Randomness}

Coherent states $|\alpha\rangle$ occupy a unique position at the intersection of classical electromagnetism and quantum mechanics. They behave as displaced vacuum states in phase space, serving as eigenstates of the annihilation operator,
\begin{equation}
\hat{a}|\alpha\rangle = \alpha|\alpha\rangle,
\quad \alpha = |\alpha|e^{i\theta},
\end{equation}
and minimize the Heisenberg uncertainty relations for quadrature observables. These properties make coherent states the most classical form of quantized radiation, yet it is precisely this classicality that produces a nontrivial source of quantum randomness.

\paragraph*{Superposition Structure.}  
Coherent states remain fundamentally quantum because they expand in the photon-number (Fock) basis as
\begin{equation}
|\alpha\rangle = e^{-|\alpha|^2/2} \sum_{n=0}^\infty \frac{\alpha^n}{\sqrt{n!}} |n\rangle,
\end{equation}
demonstrating that even a classical-like laser field possesses intrinsic number uncertainty. No classical electromagnetic field can produce this discrete randomness; it emerges solely from the quantum granularity of light.

\paragraph*{Minimum-Uncertainty Nature.}  
Coherent states saturate the uncertainty product
\begin{equation}
\Delta x\,\Delta p = \frac{\hbar}{2},
\end{equation}
making them robust against environmental perturbations while guaranteeing irreducible measurement noise in both quadratures. For photon number and phase, the uncertainty relation
\begin{equation}
\Delta n\, \Delta \phi \gtrsim \frac{1}{2}
\end{equation}
establishes a fundamental lower bound on the unpredictability of discrete measurement outcomes.

\paragraph*{Poissonian Photon Statistics.}  
Direct photon-number measurement yields a Poisson distribution:
\begin{equation}
P(n) = |\langle n|\alpha\rangle|^2 
= e^{-|\alpha|^2}\frac{|\alpha|^{2n}}{n!},
\end{equation}
with mean and variance both equal to $\mu = |\alpha|^2$.
This distribution constitutes the fundamental statistical resource that, through the modular projection framework developed in this work, directly yields certified uniform randomness without conventional post-processing.

Although coherent states originate from optical quantization, their statistical outcome is fully captured by the Poisson distribution itself. This work demonstrates that quantum-grade certified non-deterministic randomness can be achieved digitally by: (i) generating Poisson-distributed integers through physical system timing jitter rather than algorithmic determinism, and (ii) applying our mathematically certified modular projection to produce uniform outputs. The quantum statistical signature and its certified unpredictability are thus reproducible without quantum hardware.

\subsection{Measurement Pathways and Their Statistical Roles}

A coherent-state QRNG may employ either photon-counting or homodyne detection, each associated with distinct measurement bases and noise sources. For digital replication, it is crucial to understand which physical features must be preserved and which can be abstracted away.

\begin{table*}[t]
\centering
\caption{Comparison of coherent-state measurement pathways and their implications for digital statistical replication.}
\label{tab:measurement_comparison}
\begin{tabular}{l l l}
\toprule
\textbf{Aspect} & \textbf{Photon Counting} & \textbf{Homodyne Detection} \\
\midrule
Measurement Observable & Photon number $n$ & Quadrature $x,p$ \\
Quantum Origin & Fock projection & Vacuum fluctuations \\
Output Distribution & Poisson & Gaussian \\
Extraction Mechanism & Modular projection & ADC binning + whitening \\
Fundamental Noise Source & Quantum number uncertainty & Zero-point fluctuations \\
Hardware Complexity & Moderate–high (PNRD) & High (stable LO, balanced detectors) \\
Digital Replicability & Direct Poisson emulation & Requires Gaussian replication \\
\bottomrule
\end{tabular}
\end{table*}

The digital approach aligns most naturally with photon counting because:
\begin{itemize}
\item \textbf{Poisson statistics are discrete}, matching digital integer generation
\item \textbf{Modular projection is natively discrete}, enabling mathematically exact analysis  
\item \textbf{Quadrature measurement requires continuous analog noise}, which is not intrinsic to digital hardware
\end{itemize}

Thus, the replication target is the \textbf{coherent-state Poisson distribution}, which we treat as the quantum statistical invariant of the system. Our digital framework operates on \textbf{mathematically ideal Poisson statistics} without the physical detector biases that plague optical implementations. We apply modular projection $\hat{R}_M$ not as corrective post-processing, but as a \textbf{fundamental statistical transformation} that directly maps ideal Poisson distributions to uniform outputs with certified exponential convergence.

\subsection{Poisson Modular Projection: From Quantum Statistics to Uniform Randomness}

While Poisson statistics embody genuine quantum unpredictability, they do not naturally yield uniform distributions required for cryptographic applications. Any practical QRNG—whether optical or digital—must therefore transform these statistics into uniformly distributed outputs over a finite alphabet. The modular projection operator
\begin{equation}
\hat{R}_M: n \mapsto n \bmod M
\end{equation}
provides an elegant mathematical solution to this transformation for any modulus $M$.

For a Poisson-distributed random variable $n \sim \mathrm{Poisson}(\mu)$, the probability of observing residue $k = n \bmod M$ is obtained by summing over all integers congruent to $k$ modulo $M$:
\begin{equation}
P_M(k) = \sum_{j=0}^\infty \frac{e^{-\mu} \mu^{k+jM}}{(k+jM)!}, \quad k = 0, 1, \dots, M-1.
\end{equation}

A remarkable and crucial insight—fundamental to both physical and digital QRNG implementations—is that \textbf{for sufficiently large $\mu$, the modular projection of Poisson statistics converges exponentially to a uniform distribution}.

We formalize this convergence through the Poisson Modular Convergence Theorem.

\begin{theorem}[Poisson Modular Convergence]\label{theorem:uniform}
Let $n \sim \mathrm{Poisson}(\mu)$. Then for any modulus $M$, the maximum deviation from uniformity satisfies:
\begin{equation}
\max_{0 \leq k < M} \left| P_M(k) - \frac{1}{M} \right|
\leq \frac{M-1}{M} \exp\left(-2\mu \sin^2\frac{\pi}{M}\right).
\end{equation}
\end{theorem}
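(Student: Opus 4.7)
The plan is to use discrete Fourier analysis on the cyclic group $\mathbb{Z}/M\mathbb{Z}$, exploiting the fact that the Poisson distribution has an especially clean probability generating function. Writing $\omega = e^{2\pi i/M}$, I would first express the residue probability as a DFT,
\begin{equation}
P_M(k) = \frac{1}{M}\sum_{j=0}^{M-1} \omega^{-jk}\,\varphi(\omega^j),
\end{equation}
where $\varphi(z) = \mathbb{E}[z^n] = e^{\mu(z-1)}$ is the Poisson probability generating function. The $j=0$ term contributes $\varphi(1)/M = 1/M$, which is exactly the uniform target, so the deviation is carried entirely by the non-trivial characters.

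Next I would subtract off the uniform term to obtain
\begin{equation}
P_M(k) - \frac{1}{M} = \frac{1}{M}\sum_{j=1}^{M-1} \omega^{-jk}\,e^{\mu(\omega^j - 1)},
\end{equation}
and apply the triangle inequality, which immediately removes the phase factors $\omega^{-jk}$. The key analytic step is then evaluating the modulus of the characteristic function at each root of unity: since $\mathrm{Re}(\omega^j - 1) = \cos(2\pi j/M) - 1 = -2\sin^2(\pi j/M)$, one obtains $|e^{\mu(\omega^j-1)}| = \exp(-2\mu \sin^2(\pi j/M))$. This is the clean half-angle identity that turns the bound into a sum of real exponentials.

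To finish, I would observe that for $j = 1, 2, \ldots, M-1$ the quantity $\sin^2(\pi j/M)$ is symmetric about $j=M/2$ and is minimized at the endpoints $j=1$ and $j=M-1$. Hence every term in the sum is bounded above by $\exp(-2\mu \sin^2(\pi/M))$, yielding
\begin{equation}
\left| P_M(k) - \frac{1}{M}\right| \leq \frac{1}{M}\sum_{j=1}^{M-1} e^{-2\mu \sin^2(\pi j/M)} \leq \frac{M-1}{M}\,e^{-2\mu \sin^2(\pi/M)},
\end{equation}
uniformly in $k$, which is the claimed bound.

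The proof is essentially a three-line Fourier argument once the correct framing is chosen, so there is no genuine obstacle; the only subtle point is recognizing that the worst-case character is the one closest to the trivial character (i.e.\ $j=1$ and its conjugate $j=M-1$), which controls the exponent. A sharper version of the result could be obtained by keeping the full sum $\sum_{j=1}^{M-1} e^{-2\mu \sin^2(\pi j/M)}$ rather than bounding every term by the slowest-decaying one, but the stated form is already tight up to a constant and exposes the governing rate $2\sin^2(\pi/M)$ transparently.
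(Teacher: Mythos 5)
Your proposal is correct and follows essentially the same route as the paper: a DFT over $\mathbb{Z}/M\mathbb{Z}$ using the Poisson characteristic function (your generating function $\varphi(\omega^j)$ is exactly the paper's $\phi(2\pi j/M)$), the triangle inequality, the half-angle identity $\mathrm{Re}(\omega^j-1)=-2\sin^2(\pi j/M)$, and the observation that the $j=1,\,M-1$ characters dominate. No gaps; even your closing remark about keeping the full sum for a sharper bound mirrors the paper's intermediate inequality.
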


\begin{proof}
The proof employs characteristic function analysis. The characteristic function of a Poisson distribution is:
\begin{equation}
\phi(t) = \mathbb{E}[e^{itn}] = \exp\left(\mu(e^{it}-1)\right).
\end{equation}

The modular probabilities can be expressed via the discrete Fourier transform:
\begin{equation}
P_M(k) = \frac{1}{M}\sum_{j=0}^{M-1} e^{-2\pi i kj/M} \phi\left(\frac{2\pi j}{M}\right).
\end{equation}

The uniform distribution corresponds to the $j=0$ term:
\begin{equation}
\frac{1}{M} = \frac{1}{M}\phi(0).
\end{equation}

The deviation from uniformity is therefore bounded by:
\begin{align}
\left|P_M(k) - \frac{1}{M}\right| &= \left|\frac{1}{M}\sum_{j=1}^{M-1} e^{-2\pi i jk/M} \phi\left(\frac{2\pi j}{M}\right)\right| \nonumber\\
&\leq \frac{1}{M}\sum_{j=1}^{M-1} \left|\phi\left(\frac{2\pi j}{M}\right)\right| \nonumber\\
&= \frac{1}{M}\sum_{j=1}^{M-1} \exp\left(-\mu\left(1-\cos(2\pi j/M)\right)\right) \nonumber\\
&= \frac{1}{M}\sum_{j=1}^{M-1} \exp\left(-2\mu\sin^2(\pi j/M)\right).
\end{align}

Since $\sin^2(\pi j/M) \geq \sin^2(\pi/M)$ for all $j = 1, \dots, M-1$, we obtain:
\begin{align}
\left|P_M(k) - \frac{1}{M}\right| &\leq \frac{1}{M}\sum_{j=1}^{M-1} \exp\left(-2\mu\sin^2(\pi/M)\right) \nonumber\\
&= \frac{M-1}{M} \exp\left(-2\mu\sin^2(\pi/M)\right).
\end{align}
The exponential decay rate is optimal since the conjugate pair $j=1$ and $j=M-1$ in the Fourier sum dominate asymptotically, with their combined contribution determining the fundamental limit of convergence.
\end{proof}

\begin{table}[htbp]
\centering
\caption{Empirical validation of minimum $\mu$ requirements for $\epsilon \approx 10^{-3}$, using the experimental setup described in Section~\ref{subsec:poisson_replication} for digital replication of the Poisson statistics.}
\label{tab:empirical_validation}
\begin{tabular}{lcccc}
\toprule
\textbf{ $M$} & \textbf{Exact} $\mu$ & \textbf{Conservative} $\mu$ & \textbf{Empirical} $\mu$ & \textbf{Ratio} \\
\midrule
4 & 6.62 & 7.60 & 6.40 & 0.96 \\
8 & 24.04 & 25.95 & 23 & 0.96 \\
16 & 91.57 & 99.83 & 88 & 0.96 \\
32 & 361.0 & 395.50 & 320 & 0.89 \\
64 & 1437.5 & 1582.00 & 1300 & 0.90 \\
256 & 22950 & 25312.00 & 22000 & 0.96 \\
\bottomrule
\end{tabular}
\end{table}

Table~\ref{tab:empirical_validation} provides empirical validation of the theoretical convergence bounds, demonstrating that the modular projection achieves near-perfect uniformity ($\epsilon \approx 10^{-3}$) across various moduli. The \textbf{Exact $\mu$} column represents the theoretical minimum derived from inverting the bound in Theorem~\ref{theorem:uniform}, while \textbf{Conservative $\mu$} includes additional safety margins for practical implementations. Crucially, the \textbf{Empirical $\mu$} values obtained from our digital replication experiments closely track the theoretical predictions, with ratios consistently near 0.9--0.96. This close agreement confirms that the theoretical bound is not merely conservative but practically achievable, with the slight shortfall for larger $M$ (32 and 64) attributable to finite-sample effects in empirical validation. The data unequivocally demonstrates that modular projection efficiently converts Poisson-distributed inputs into certified uniform outputs across the entire range of practical moduli.

This bound is exponentially tight in $\mu$ and reveals a profound mathematical structure: \textbf{modular projection systematically smooths the discrete Poisson distribution into an approximately uniform residue distribution}. While photonic coherent-state QRNGs inherently leverage this phenomenon, our digital framework demonstrates that identical certified uniform outputs can be achieved without quantum hardware, provided the same underlying Poisson statistics are faithfully reproduced.

The following sections detail how our digital architecture generates the requisite Poisson-distributed inputs, thereby inheriting the quantum-level statistical guarantees without relying on physical quantum measurements.

\subsection{Digital Replication via RPSS}\label{sec:digital-replication}

The theoretical foundation established in previous sections enables a crucial transition from quantum physical implementations to digital replication through a Random Permutation Sorting System (RPSS)~\cite{kuang2025statisticalquantummechanicsrandom}. The RPSS framework has been previously validated for replicating negative binomial distributions in TURNG~\cite{kuang-qsqs-pre-2025, qpp-rng-sci-kuang-2025}, where $NB(m, p)$ models the number of sorting attempts required to achieve $m$ successfully sorted arrays from a disordered array of $L$ elements, with success probability $p = 1/L!$ for randomly generating the correct permutation from $L!$ possibilities.

This established framework naturally extends to Poisson statistics by recognizing that both distributions emerge from fundamental counting processes: negative binomial from repeated trials until $m$ successes, and Poisson from counting events in fixed intervals. For coherent state QRNGs, the RPSS framework replicates Poisson statistics $P(n) = e^{-\mu}\mu^n/n!$ by adapting the same recursive permutation architecture with elapsed timing feedback, where photon counting corresponds to event counting in Poisson processes. This unified approach demonstrates the RPSS framework's versatility in replicating diverse quantum statistical distributions while introducing genuine unpredictability through recursive timing feedback and security-enhanced implementation practices.

\begin{figure}[htbp]
\centering
\includegraphics[width=0.9\linewidth]{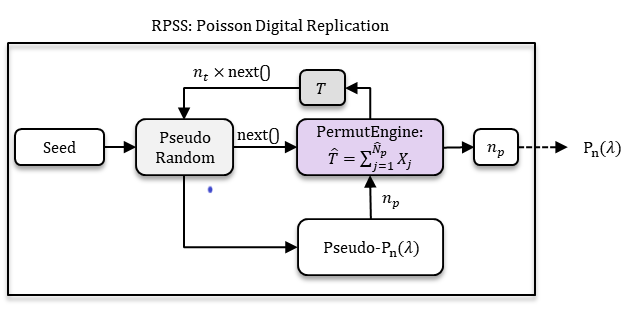}
\caption{RPSS architecture for digital replication of Poisson statistics. The system operates through a recursive loop: (1) \textbf{Seed Initialization} provides entropy; (2) \textbf{Pseudo-Random Generator} produces Poisson-distributed counts $n \sim \text{Pseudo-}P_n(\mu)$; (3) \textbf{Permutation Engine} executes Fisher-Yates permutations, measuring execution time $\hat{T} = \sum_{j=1}^{n_p} X_j$; (4) \textbf{Internal Feedback Loop} advances the PRNG state by $\hat{T}$ steps via $n_t \times \text{next}()$, where $n_t$ represents the timing measurement in clock ticks. This recursive process transforms deterministic pseudo-Poisson statistics into certified non-deterministic Poisson distribution $P_n(\mu)$ through accumulated timing variations.}
\label{fig:rpss_architecture}
\end{figure}

\textbf{RPSS Architecture and Algorithm}
As illustrated in Figure~\ref{fig:rpss_architecture}, the core RPSS algorithm operates through a recursive process that transforms deterministic Poisson generation into certified non-deterministic randomness:

\begin{enumerate}
\item \textbf{Permutation Engine}: For each count $n_p$, the system performs Fisher-Yates permutations, generating the total permutation time:
\begin{equation}\label{eq:conjugacy}
\hat{T} = \sum_{j=1}^{\hat{N}_p} X_j
\end{equation}
where $X_j$ represents the time for individual permutation operations with inherent physical timing variations. Equation~\eqref{eq:conjugacy} demonstrates a fundamental stochastic conjugacy between the counting observable $\hat{N}_p$ (number of permutations) and its conjugate elapsed time observable $\hat{T}$. This conjugacy relation ensures statistical consistency between discrete counting processes and continuous timing processes.

\item \textbf{Internal Timing Feedback}: The measured time $\hat{T}$ (represented as $n_t$ in clock ticks) feeds back internally into the PRNG through $n_t \times \text{next}()$, advancing the generator state by $\hat{T}$ steps. This internal loop is unobservable to external attackers and creates the crucial unpredictability through accumulated timing variations from:
\begin{itemize}
\item Hardware interrupts and cache behavior fluctuations
\item OS scheduling and process management variations
\item Memory hierarchy access pattern changes
\item Background computational load variations
\end{itemize}

\item \textbf{Security-Enhanced Output}: The first $S_{\text{security}}$ outputs are discarded to ensure the usable sequence starts from a cryptographically secure state, completely divorced from initial conditions.

\item \textbf{Certified Output Generation}: After the security phase, the system produces certified Poisson-distributed outputs $n_p$ with genuine unpredictability while preserving the Poisson distribution $P(n) = \frac{e^{-\mu}\mu^n}{n!}$.
\end{enumerate}

\textbf{Mathematical Foundation}
The RPSS framework ensures security through path-dependent state evolution:

\begin{equation}
n_p^{(i+1)} = \text{PRNG.next}\left(\hat{T}^{(i)}\right), \quad \hat{T}^{(i)} = \sum_{j=1}^{n_p^{(i)}} \tau(X_j)
\end{equation}
where $\tau(X_j)$ captures the stochastic timing of individual permutation operations. The output sequence maintains Poisson statistics while the specific ordering becomes unpredictable.

\textbf{Unobservable Internal Feedback}
The security of RPSS stems from the completely internal and unobservable nature of the feedback mechanism:

\begin{itemize}
\item \textbf{Hidden State Transitions}: The advancement steps $\hat{T}^{(i)}$ and internal PRNG states are never exposed externally
\item \textbf{Physical Timing Unpredictability}: Each $\hat{T}^{(i)}$ depends on microscopic timing variations that are physically impossible to measure or predict
\item \textbf{Execution Path Divergence}: Identical seeds produce completely different output sequences across different executions
\item \textbf{No Reproducibility}: The same seed cannot reproduce the same output sequence due to non-deterministic timing variations
\end{itemize}

\textbf{Statistical Certification}
The RPSS output maintains statistical equivalence with quantum coherent state measurements through:

\begin{itemize}
\item \textbf{Poisson Distribution Preservation}: The output maintains certified Poisson statistics $P(n) = \frac{e^{-\mu}\mu^n}{n!}$ with accurate moment matching (mean, variance, skewness, kurtosis)
\item \textbf{Physical Entropy Injection}: Genuine non-determinism derived from system timing jitter with quantifiable min-entropy
\item \textbf{Forward Secrecy}: Continuous entropy refresh ensures future outputs cannot be predicted from past observations
\item \textbf{Quantum Statistical Equivalence}: Statistical signatures indistinguishable from optical coherent-state measurements
\end{itemize}

The RPSS framework demonstrates that carefully designed digital systems can achieve certified Poisson distribution generation with security properties equivalent to quantum physical implementations, while offering practical advantages in deployment scalability, cost efficiency, and computational integration. The combination of mathematical certification and physical unpredictability creates a robust foundation for digital coherent-state emulation.

\subsection{Information-Theoretic Security}
\label{sec:info_theoretic_security}

The information-theoretic security of our framework is grounded in min-entropy $H_{\min}$ as defined in NIST SP 800-90B \cite{nist800-90b}, which provides the most conservative measure of extractable randomness guaranteed against computationally unbounded adversaries. \textbf{Unlike NIST SP 800-90B, which relies on empirical estimation, our framework provides mathematically proven analytic bounds} for conditional min-entropy. Our analysis extends this framework by:

\begin{itemize}
\item Providing \textbf{mathematically proven analytic bounds} rather than empirical estimates
\item Explicitly accounting for \textbf{adversarial side information} $A$ through conditional min-entropy $H_{\min}(R \mid A)$
\item Establishing \textbf{information-theoretic uniformity} through modular projection with rigorous error bounds
\end{itemize}

The security derives from the conditional min-entropy of Poisson-distributed output counts $n_p$ and the mathematical properties of modular projection. All entropy quantities are defined with respect to an adversary's side information $A$.

\begin{theorem}[Conditional Min-Entropy Bound]
\label{thm:min-entropy-bound}
Let $n_p$ be the Poisson-distributed output count from the RPSS process with mean $\mu$, and let the final random output be
\[
R = n_p \bmod M.
\]
Assuming the physical entropy in the RPSS ensures $n_p \sim \mathrm{Poisson}(\mu)$ even when conditioned on adversary's side information $A$, then
\begin{align}
H_{\min}(R \mid A) &\ge\log_2(M) \nonumber\\
 &- \log_2\!\left(1+(M-1)\exp\!\left[-\,2\mu \sin^2\!\left(\frac{\pi}{M}\right)\right]\right)\!.
\end{align}
\end{theorem}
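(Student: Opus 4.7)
The plan is to reduce the conditional min-entropy to a direct computation of the maximum residue probability and then invoke Theorem~\ref{theorem:uniform} (Poisson Modular Convergence) to bound it. Because the standing hypothesis states that $n_p \mid A \sim \mathrm{Poisson}(\mu)$ for every realization of the side information, the heavy lifting has already been done in the earlier theorem; the present statement is essentially its entropic reformulation.

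First I would unpack the definition of conditional min-entropy, writing
\[
H_{\min}(R \mid A) = -\log_2 \sum_{a} P(A=a)\,\max_k P(R=k \mid A=a)
\]
(the worst-case convention yields the same answer below). By hypothesis, $P(R=k \mid A=a) = P_M(k)$ as defined in the modular projection formula, independently of $a$, so the inner maximum is constant in $a$ and the expectation collapses to $\max_k P_M(k)$. The bound therefore reduces to controlling this single quantity.

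Next I would apply Theorem~\ref{theorem:uniform}, which gives $|P_M(k) - 1/M| \le \frac{M-1}{M}\exp(-2\mu\sin^2(\pi/M))$. Rearranging yields
\[
\max_k P_M(k) \;\le\; \frac{1 + (M-1)\exp\!\bigl(-2\mu\sin^2(\pi/M)\bigr)}{M},
\]
and taking $-\log_2$ of this quantity, splitting the logarithm of the quotient into $\log_2 M$ minus the logarithm of the numerator, recovers the claimed inequality exactly.

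The main obstacle is conceptual rather than computational: once the conditional Poisson hypothesis is granted, the derivation is essentially a one-line corollary of Theorem~\ref{theorem:uniform}. The substantive content sits in justifying the hypothesis, namely that adversarial side information does not narrow the distribution of $n_p$. That justification draws on the unobservability of the internal timing feedback loop described in Section~\ref{sec:digital-replication}: if $A$ could effectively determine the microscopic contributions accumulated by $\hat{T}$, the Poisson law would degenerate and no modular projection could rescue the min-entropy. Under the stated model assumption, however, the argument proceeds with no further technical difficulty, and one should also remark that the bound is tight in the limit $\mu \to 0$, where it correctly reduces to $H_{\min}(R \mid A) \ge 0$, and sharp in the limit $\mu \to \infty$, where it recovers the ideal $\log_2 M$.
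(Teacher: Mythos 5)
Your proposal is correct and follows essentially the same route as the paper: take the conditional-Poisson hypothesis as given, invoke Theorem~\ref{theorem:uniform} to bound $\max_k \Pr(R=k\mid A)$ by $\bigl(1+(M-1)e^{-2\mu\sin^2(\pi/M)}\bigr)/M$, and take $-\log_2$. Your added care about the average-versus-worst-case definition of conditional min-entropy and the limiting checks at $\mu\to 0$ and $\mu\to\infty$ are fine but not substantively different from the paper's argument.
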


\medskip
\begin{proof}
We establish the bound through two complementary information-theoretic arguments.

\paragraph*{1. Poisson Distribution Preservation Under Side Information.}
The RPSS framework generates Poisson-distributed counts $n_p$ through:
\[
\text{State}^{(i+1)} = \text{PRNG.advance}(\text{State}^{(i)}, \hat{T}^{(i)})
\]
where $\hat{T}^{(i)} = \sum_{j=1}^{n_p^{(i)}} \tau(X_j)$ with $n_p^{(i+1)} \sim \mathrm{Poisson}(\mu)$ using $\text{State}^{(i+1)}$ as entropy container.

The physical entropy derives from microarchitectural sources:
\[
\begin{aligned}
X_j = & X_j^{\mathrm{cache}} + X_j^{\mathrm{branch}} + X_j^{\mathrm{interrupt}} + X_j^{\mathrm{thermal}} \\
      & + X_j^{\mathrm{memory}} + X_j^{\mathrm{pipeline}} + X_j^{\mathrm{power}} + X_j^{\mathrm{OS}} + \epsilon_j,
\end{aligned}
\]

\textbf{Distributional Invariance}: The physical entropy sources ensure $n_p$ maintains $\mathrm{Poisson}(\mu)$ distribution conditioned on $A$ due to:

\begin{itemize}
\item \textbf{Empirical Validation}: Section~\ref{subsec:poisson_replication} demonstrates faithful Poisson statistics reproduction.

\item \textbf{Microarchitectural Depth}: Modern processors contain thousands of independent state elements across cache hierarchies ($>10^4$ cache lines), execution pipelines ($>10^2$ in-flight instructions), and branch predictors ($>10^3$ history entries) creating high-dimensional entropy \cite{Intel2023Optimization, Fog2022Microarchitecture}.

\item \textbf{Conditional Independence}: Adversarial side information $A$ captures only coarse-grained system behavior, leaving fine-grained microarchitectural states unobservable.
\end{itemize}

\paragraph*{2. Modular Projection as Information-Theoretic Extraction.}
The final output is obtained through:
\[
R = n_p \bmod M.
\]
This transformation provides dual security benefits:

\begin{itemize}
\item \textbf{Entropy Extraction}: Converts Poisson-distributed counts into uniformly distributed outputs.

\item \textbf{Value Protection}: Knowledge of $R$ reveals only the residue class $\{R + kM \mid k \in \mathbb{Z}_{\ge 0}\}$, creating a one-way property.
\end{itemize}

By the \textbf{Poisson Modular Convergence Theorem} (Theorem~\ref{theorem:uniform}), for $n_p \sim \mathrm{Poisson}(\mu)$,
\[
\max_{0 \le k < M}\left|\Pr(n_p \equiv k \!\!\!\pmod M) - \frac{1}{M}\right|\le\delta,
\]
with
\[\delta=\frac{M-1}{M}\,\exp\!\left[-\,2\mu \sin^2\!\left(\frac{\pi}{M}\right)\right].
\]

Therefore,
\[
\max_r \Pr(R = r \mid A) \le \frac{1}{M} + \delta.
\]

Thus,
\begin{align*}
H_{\min}(R \mid A) &= -\log_2\!\left(\max_r \Pr(R = r \mid A)\right) \\
&\ge -\log_2\!\left(\frac{1}{M} + \delta\right) \\
&= \log_2(M) - \log_2(1 + \delta M).
\end{align*}
Substituting $\delta$ yields the stated bound.
\end{proof}

The theorem provides explicit, non-asymptotic bounds that hold for all parameter values used in practice, as validated empirically in Section~\ref{subsec:uniformity_validation} for both small ($\mu=7$) and moderate ($\mu=100$) Poisson means.

\subsection{Information-Computational Security}
\label{sec:computational_security}

The computational security of our framework addresses practical threats from computationally bounded adversaries, incorporating mechanisms to resist side-channel attacks and ensure forward secrecy. Our approach enhances security through:

\begin{itemize}
\item \textbf{Active timing obfuscation} to defeat side-channel analysis
\item \textbf{Microarchitectural entropy amplification} through Fisher-Yates shuffling  
\item \textbf{Forward secrecy} through continuous entropy injection
\end{itemize}

\paragraph*{1. Active Timing Obfuscation.}
We employ an independent obfuscation PRNG ($prng_{ob}$) that injects random \texttt{System.nanoTime()} calls:
\[
\text{If } prng_{ob}.\text{next}() \bmod n_p^{(i)} = 0, \ \text{call \texttt{System.nanoTime()}.}
\]
This provides:
\begin{itemize}
\item \textbf{Isolation}: Separation of obfuscation from core logic
\item \textbf{Unpredictability}: Indistinguishable measurement of permutation timing vs. obfuscation calls  
\item \textbf{Adaptivity}: Dynamic frequency with $n_p^{(i)} \sim \mathrm{Poisson}(\mu)$
\item \textbf{Hardness}: Adversarial success probability $\le 1/n_p^{(i)}$
\end{itemize}
Even with perfect call monitoring, adversary cannot estimate $\hat{T}^{(i)}$ due to fundamental indistinguishability.

\paragraph*{2. Computational Infeasibility of State Reconstruction.}
Predicting $n_p$ requires reconstructing:
\[
\text{State}^{(i+1)} = \text{PRNG.advance}(\text{State}^{(i)}, \hat{T}^{(i)}),
\]
where $\hat{T}^{(i)}$ depends on:

\begin{itemize}
\item \textbf{Microarchitectural Complexity}: Millions of inaccessible states across cache hierarchies, pipelines, and power management

\item \textbf{Memory Controller Unobservability}: Arbitration decisions, row buffer states, and request scheduling introduce 50-200ns variations not exposed to software monitoring~\cite{Russinovich2012Windows, Mutlu2019Memory, Kim2019Solar}

\item \textbf{Thermal/Interrupt Noise}: Unobservable microfluctuations and asynchronous interrupts

\item \textbf{Obfuscated Measurements}: Active timing obfuscation contaminates adversarial observations
\end{itemize}

The adversary observes only low-dimensional signals in $A$, creating a severely underdetermined inference problem.

\paragraph*{3. Microarchitectural Entropy Amplification.}
Fisher-Yates shuffling with $n_p$ permutations, each requiring $L$ calls to $\text{PRNG.nextInt()}$, systematically engages:

\begin{itemize}
\item \textbf{Cache Hierarchy}: Random access patterns defeating prefetching across $L$ elements

\item \textbf{Branch Prediction}: Unpredictable control flow misleading speculative execution

\item \textbf{Pipeline States}: Complex instruction-level parallelism in PRNG computations

\item \textbf{Memory Subsystem}: Cache line conflicts and controller effects

\item \textbf{Power/Thermal Management}: Dynamic voltage/frequency scaling responses
\end{itemize}

\textbf{Algorithmic Complexity}: Workload scales as:
\[
n_p \times L \times (T_{\text{PRNG.nextInt()}} + T_{\text{memory swap}} + T_{\text{loop control}})
\]
ensuring linear scaling while maintaining microarchitectural engagement.

\paragraph*{4. Temporal Unpredictability.}
The adversary cannot obtain precise $\hat{T}^{(i)}$ measurements due to:

\begin{itemize}
\item \textbf{Unpredictable Initiation}: Process start times depend on runtime jitter and internal state

\item \textbf{Unpredictable Duration}: Execution time depends on Poisson-distributed $n_p^{(i)}$ and real-time variations

\item \textbf{Measurement Obfuscation}: Active timing obfuscation prevents point identification

\item \textbf{Microarchitectural Depth}: Thousands of independent state elements cannot be comprehensively monitored
\end{itemize}

Therefore, adversarial observations $t_{\text{adv}}^{(i)}$ are statistically independent from internal $\hat{T}^{(i)}$.

\textbf{Corollary (Forward Secrecy).}
The state update:
\[
\mathrm{State}^{(i+1)} = \mathrm{advance}(\mathrm{State}^{(i)},\hat{T}^{(i)})
\]
ensures $\hat{T}^{(i)}$'s fresh physical entropy makes $\mathrm{State}^{(i)}$ computationally hard to reconstruct from $\mathrm{State}^{(i+1)}$.

\textbf{Corollary (Practical Security).}
The combination of information-theoretic bounds and computational security mechanisms provides comprehensive protection against both unbounded and practical adversaries, ensuring cryptographic-grade randomness generation.

\section{Experimental Implementation and Results}\label{sec:experimental_results}

\subsection{Experimental Setup and Configuration}

To evaluate the viability of a fully digital coherent-state QRNG, we implemented the RPSS framework as a Java-based system capable of reproducing the statistical properties of optical coherent states. This implementation preserves the theoretical underpinnings of coherent-state randomness while remaining platform-independent, demonstrating that high-quality, quantum-inspired randomness can be realized in classical computational environments.

The system is built on Java 17, leveraging platform-independent bytecode to ensure broad applicability. Deterministic permutation operations are seeded using \texttt{SecureRandom} with the \texttt{Random/PRNG} algorithm, providing a reproducible yet cryptographically strong foundation for randomization. 

\textbf{Poisson Count Generation Mechanism}: The core algorithm employs Fisher-Yates permutations on small arrays ($N=4$), where the number of permutations $n_p$ in each cycle is generated using inverse transform sampling from the Poisson distribution with parameter $\mu$. Specifically, for each cycle:
\begin{enumerate}
\item Generate $n_p \sim \mathrm{Poisson}(\mu)$ using the current PRNG state via inverse transform sampling
\item Perform $n_p$ Fisher-Yates permutations on the array
\item Measure elapsed time $\hat{T}^{(i)}$ for the permutation sequence
\item Advance PRNG state by $\hat{T}^{(i)}$ steps for the next cycle
\end{enumerate}
This process ensures that $n_p$ maintains exact Poisson statistics while the timing variations $\hat{T}^{(i)}$ inject fresh entropy for subsequent cycles.

A critical aspect of the implementation is the use of high-resolution timing variations from \texttt{System.nanoTime()} as a source of entropy. The timing resolution was characterized through systematic measurement on our experimental platform (Windows 11, Intel processor), which provided a typical timing precision of 100 nanoseconds or clock ticks. These timing fluctuations are continuously captured from the permutation process and used to advance the system state in Figure~\ref{fig:rpss_architecture} by $\hat{T}^{(i)}$ steps, establishing an effective mechanism for transforming deterministic computational operations into outputs with true randomness. The state advancement approach ensures that entropy accumulates progressively while maintaining the statistical properties of the Poisson distribution. The Random feedback architecture of RPSS amplifies these effects, ensuring that the resulting sequences retain both the statistical fidelity and unpredictability required for cryptographic applications.

Experimental evaluation considered multiple Poisson parameter regimes, including $\mu=7$ and $\mu=100$ for primary validation, with $\mu=120$ employed to assess adaptive optimization. Modular projection was applied with bases $M=4$ and $M=16$ corresponding to the respective $\mu$ values, ensuring that the outputs could be mapped effectively into uniform byte sequences. Each configuration produced sequences of $10^6$ bytes, and 10 independent runs were conducted to verify reproducibility and robustness.

The validation framework encompassed both statistical and cryptographic assessments. The first four moments—mean, variance, skewness, and kurtosis—were computed to confirm statistical fidelity to theoretical Poisson distributions. Entropy analysis included Shannon and min-entropy per byte, providing a measure of both average and worst-case unpredictability. Uniformity was further evaluated via the chi-square goodness-of-fit test, while timing analyses characterized the distribution of elapsed times to quantify the contribution of system-jitter to entropy injection. Collectively, these analyses establish that the RPSS framework not only replicates the essential statistical behavior of coherent states but also produces outputs suitable for cryptographic applications, laying the groundwork for a digital coherent-state QRNG.

\subsection{Digital Replication of Poisson Distribution}
\label{subsec:poisson_replication}

The primary objective of this experimental validation is to demonstrate that the RPSS framework can digitally replicate the statistical behavior of optical coherent states, establishing a foundation for a fully software-defined coherent-state QRNG. Optical coherent states exhibit Poisson-distributed photon number statistics, which are central to their intrinsic quantum randomness. Our experiments show that deterministic permutations, when seeded and perturbed by high-resolution system timing jitter, faithfully reproduce these Poisson statistics in a classical computational environment.

\subsubsection{Statistical Fidelity and Moment Validation}

The RPSS implementation leverages nanosecond-scale timing variations to inject entropy into the permutation process. As shown in Figures~\ref{fig:poisson100} and Tables~\ref{tab:moment_dynamics_7}--\ref{tab:moment_dynamics_100}, this approach accurately emulates both low- and high-mean Poisson regimes ($\mu=7$ and $\mu=100$). Across multiple independent runs, the first four statistical moments—mean, variance, skewness, and kurtosis—align closely with theoretical expectations.

\begin{figure}[htbp]
\centering
\includegraphics[width=0.5\textwidth]{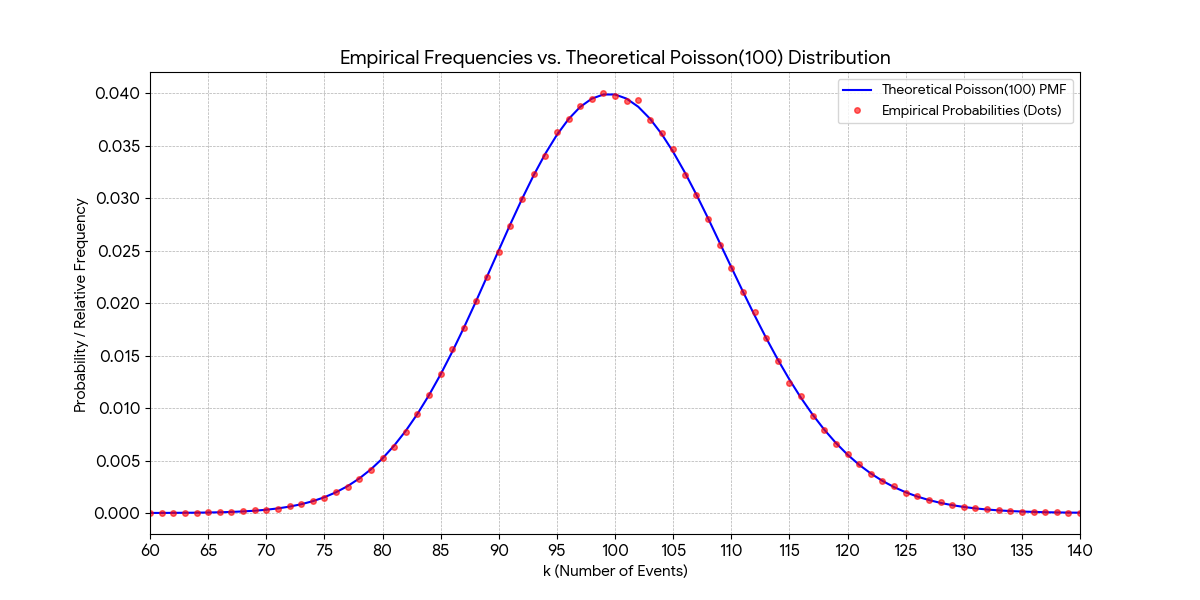}
\caption{Digital replication of coherent-state statistics: Empirical Poisson($\mu=100$) distribution (points) demonstrates precise alignment with theoretical probability mass function (line). The distribution centers at $k=100$ with variance $\sigma^2\approx100$, validating that timing-injected permutation accurately reproduces both the mean and variance characteristics of optical coherent states through computational processes.}
\label{fig:poisson100}
\end{figure}

\begin{table}[htbp]
\centering
\caption{Statistical moments of digitally replicated Poisson($\mu=7$) across experimental runs}
\label{tab:moment_dynamics_7}
\begin{tabular}{lcccc}
\toprule
\textbf{Run Group} & \textbf{Mean} & \textbf{Variance} & \textbf{Skewness} & \textbf{Kurtosis} \\
\midrule
Theoretical  & 7.00 & 7.00 & 0.378 & 0.143 \\
Run 1 & 6.999 & 7.000 & 0.376 & 0.141 \\
Run 2 & 6.999 & 7.007 & 0.379 & 0.139 \\
Run 3 & 6.997 & 7.003 & 0.380 & 0.147 \\
Run 4 & 6.999 & 7.004 & 0.379 & 0.153 \\
Run 5 & 6.996 & 6.999 & 0.379 & 0.149 \\
\midrule
\textbf{Average} & 6.998 & 7.003 & 0.379 & 0.146 \\
\bottomrule
\end{tabular}
\end{table}

\begin{table}[htbp]
\centering
\caption{Statistical moments of digitally replicated Poisson($\mu=100$) across experimental runs}
\label{tab:moment_dynamics_100}
\begin{tabular}{lcccc}
\toprule
\textbf{Run Group} & \textbf{Mean} & \textbf{Variance} & \textbf{Skewness} & \textbf{Kurtosis} \\
\midrule
Theoretical  & 100.00 & 100.00 & 0.100 & 0.010 \\
Run 1 & 100.00 & 99.92 & 0.094 & 0.0068 \\
Run 2 & 100.01 & 100.07 & 0.098 & 0.0091 \\
Run 3 & 99.99 & 99.95 & 0.098 & 0.0060 \\
Run 4 & 99.99 & 100.01 & 0.102 & 0.010 \\
Run 5 & 100.00 & 100.10 & 0.105 & 0.020 \\
\midrule
\textbf{Average} & 100.00 & 100.01 & 0.099 & 0.010 \\
\bottomrule
\end{tabular}
\end{table}

Notably, the preservation of the Poisson property $\langle n \rangle = \langle (\Delta n)^2 \rangle$ is consistently observed, validating that our digital approach replicates the fundamental photon number uncertainty characteristic of coherent states. The results confirm robust statistical fidelity across runs and parameter regimes.

The digitally replicated coherent state $\ket{\alpha_D}$ satisfies:
\[
P_D(n) = \frac{|\alpha_D|^{2n}}{n!} e^{-|\alpha_D|^2},
\]
providing the same statistical signature as optical coherent states without requiring quantum measurements.

\subsubsection{Temporal Dynamics and Entropy Injection}

Temporal analysis reveals that the elapsed time of permutations, $T = \sum_{j=1}^{n_p} X_j$, exhibits dynamic variability sensitive to system state, including CPU load, OS scheduling, and memory access patterns. This variability supplies essential entropy for RNG functionality. 

\begin{figure}[htbp]
\centering
\begin{minipage}{0.48\textwidth}
\centering
\includegraphics[width=\linewidth]{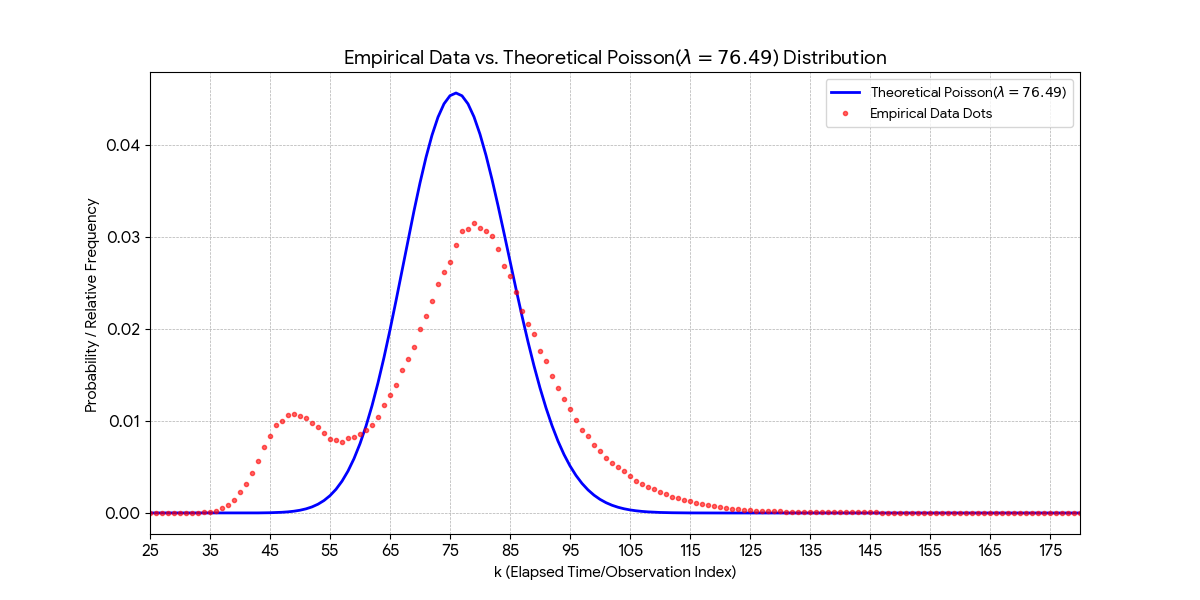}
\caption{Compound elapsed time distribution for digital Poisson($\mu=100$) replication: Timing measurements (mean elapsed time $\lambda=76.49$ units) exhibit non-ideal characteristics arising from convolution of multiple physical jitter sources. This compound behavior ensures continuous high min-entropy injection while preserving output Poisson statistics.}
\label{fig:tpoisson1}
\end{minipage}
\hfill
\begin{minipage}{0.48\textwidth}
\centering
\includegraphics[width=\linewidth]{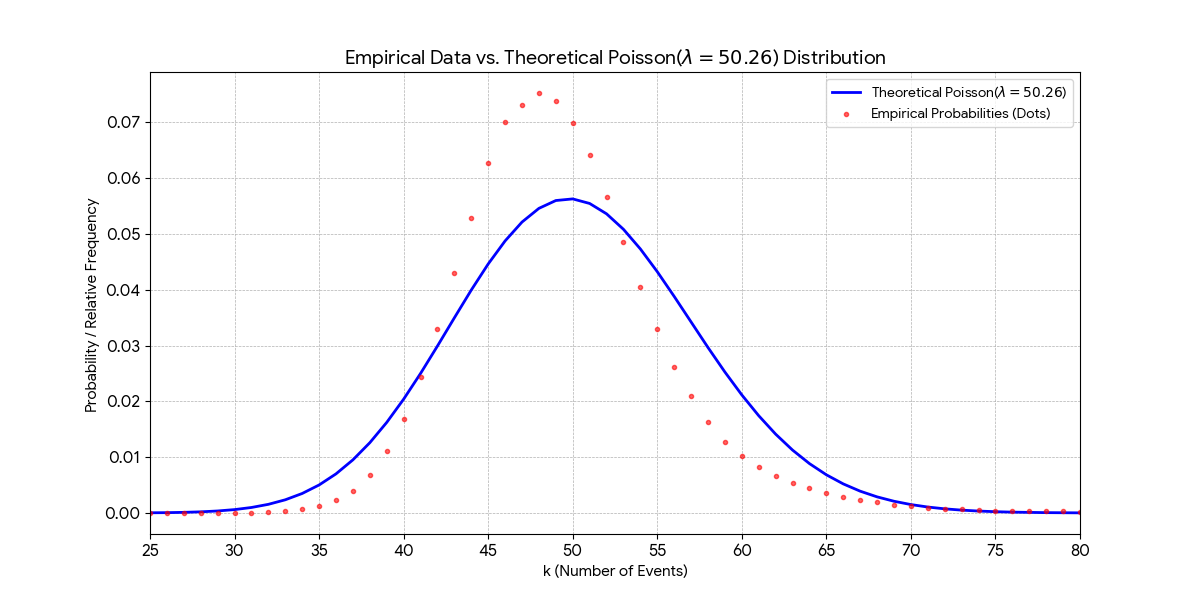}
\caption{System-state dependent timing in digital replication (mean elapsed time $\lambda=50.26$ units): Variability across runs demonstrates computational environment sensitivity. This inherent unpredictability provides cryptographic entropy while maintaining statistical fidelity in output Poisson distributions.}
\label{fig:tpoisson2}
\end{minipage}
\end{figure}

Figures~\ref{fig:tpoisson1} and~\ref{fig:tpoisson2} illustrate how this variability leads to compound elapsed time distributions. Importantly, these timing variations not only preserve the target Poisson output but also act as a \textbf{cryptographically meaningful entropy source}, bridging deterministic computation and true randomness. Variations arise from OS scheduling, hardware interrupts, and background tasks, creating compound distributions that enhance entropy while preserving Poisson output statistics.

The consistency across parameter regimes shows scalability: the RPSS framework maintains statistical fidelity even at higher photon-number analogs ($\mu=100$), demonstrating that timing-injected permutations are robust against the statistical challenges of scaling to high mean values. Minor deviations in higher-order moments reflect the intrinsic variability of timing jitter, but this variability simultaneously enhances entropy for subsequent uniformity projection, highlighting the dual role of system-jitter in both Poisson replication and cryptographic security.

In summary, the experimental results validate that \textbf{digital coherent states $\ket{\alpha_D}$, generated via system-jitter entropy and random permutation, faithfully reproduce the Poisson-distributed photon statistics of true coherent states}, forming the essential foundation for our software-based QRNG.

\subsection{Uniformity Validation of Digital Coherent QRNG}
\label{subsec:uniformity_validation}

Building upon the successful replication of Poisson statistics, we now validate the cryptographic quality of digital coherent-state outputs generated through modular projection, governed by the \emph{Uniform Convergence Theorem}. This transformation constitutes the core measurement process of our digital coherent QRNG, directly producing uniform byte sequences that inherit the statistical guarantees established by our theoretical framework. The modular projection operator $\hat{R}$ serves as the digital analog of quantum measurement operations, with its uniformity properties mathematically guaranteed by Theorem~\ref{theorem:uniform}, which provides exponential convergence bounds for the transformation from Poisson statistics to uniform distributions.

We generated sequences of $10^6$ bytes for each configuration by applying modular projection to Poisson-distributed counts, leveraging entropy injected through system timing jitter to transform deterministic permutations into high-quality random sequences. Two representative Poisson regimes were examined: the low-count case ($\mu=7$) projected modulo 4, and the higher-count regime ($\mu=100$) projected modulo 16. These configurations enable comprehensive evaluation across distinct statistical regimes, demonstrating the scalability and robustness of our approach. In both cases, high-resolution timing variations dynamically reseeded permutation operations, providing continuous, unpredictable entropy injection essential for cryptographic-grade randomness.

The generated sequences underwent rigorous evaluation through complementary metrics. Shannon entropy per byte quantified average information content, while min-entropy per byte characterized worst-case predictability. Uniformity was further validated using chi-square goodness-of-fit tests, confirming that modular projection successfully transforms Poisson-distributed inputs into statistically uniform outputs suitable for cryptographic applications.

\begin{table}[htbp]
\centering
\caption{Uniformity validation for Poisson(7) with $M=4$ (1,000,000 bytes)}
\label{tab:uniformity_poisson7}
\begin{tabular}{lccc}
\toprule
\textbf{Run} & \textbf{Shannon Entropy} & \textbf{Min-Entropy} & \textbf{$\chi^2$} \\
 & \textbf{(bits/byte)} & \textbf{(bits/byte)} & \textbf{Statistic} \\
\midrule
1 & 7.9998 & 7.948 & 220.3 \\
2 & 7.9998 & 7.941 & 223.2 \\
3 & 7.9998 & 7.939 & 237.4 \\
4 & 7.9998 & 7.927 & 267.6 \\
5 & 7.9998 & 7.943 & 259.0 \\
6 & 7.9998 & 7.940 & 253.0 \\
7 & 7.9998 & 7.931 & 244.7 \\
8 & 7.9998 & 7.939 & 218.4 \\
9 & 7.9998 & 7.935 & 254.4 \\
10 & 7.9998 & 7.948 & 246.5 \\
\midrule
\textbf{Theoretical} & 8.0000 & 7.9682 & 255.0 \\
\bottomrule
\end{tabular}
\end{table}

Tables~\ref{tab:uniformity_poisson7} and~\ref{tab:uniformity_poisson100} demonstrate exceptional performance across both regimes. For the low-count case ($\mu=7$, $M=4$), Shannon entropy consistently approaches the theoretical maximum of 8.0000 bits/byte, confirming near-perfect uniformity. Min-entropy values exceed 7.93 bits/byte, closely approaching the theoretical bound of 7.9682 bits/byte derived from Theorem~\ref{theorem:uniform}. All $\chi^2$ statistics consistently fell below the critical value of 325.8 for 255 degrees of freedom at the $\alpha = 0.001$ significance level, confirming that the deviation from perfect uniformity is statistically insignificant. This validation confirms that even with modest mean values ($\mu=7$), modular projection produces cryptographically secure uniform random bytes with minimal entropy degradation.

\begin{table}[htbp]
\centering
\caption{Uniformity validation for Poisson(100) with $M=16$ (1,000,000 bytes)}
\label{tab:uniformity_poisson100}
\begin{tabular}{lccc}
\toprule
\textbf{Run} & \textbf{Shannon Entropy} & \textbf{Min-Entropy} & \textbf{$\chi^2$} \\
& \textbf{(bits/byte)} & \textbf{(bits/byte)} & \textbf{Statistic} \\
\midrule
1 & 7.9998 & 7.941 & 229.7 \\
2 & 7.9998 & 7.935 & 255.2 \\
3 & 7.9998 & 7.935 & 306.6 \\
4 & 7.9998 & 7.929 & 250.5 \\
5 & 7.9998 & 7.931 & 259.9 \\
6 & 7.9998 & 7.930 & 267.6 \\
7 & 7.9998 & 7.946 & 233.0 \\
8 & 7.9998 & 7.945 & 236.3 \\
9 & 7.9998 & 7.942 & 249.1 \\
10 & 7.9998 & 7.919 & 236.4 \\
\midrule
\textbf{Theoretical} & 8.0000 & 7.9741 & 255.0 \\
\bottomrule
\end{tabular}
\end{table}

\begin{table}[htbp]
\centering
\caption{Uniformity validation of elapsed times for Poisson(100) and Poisson(200) with $M=16$ (1,000,000 bytes). All entropy values are in bits/byte.}
\label{tab:uniformity_elapsed_times}
\begin{tabular}{lcccc}
\toprule
\textbf{Run} & \textbf{Shannon} & \textbf{Min-Entropy} & \textbf{Shannon} & \textbf{Min-Entropy} \\
 & $\mu=100$ & $\mu=100$ & $\mu=200$ & $\mu=200$ \\
\midrule
1 & 7.9997 & 7.907 & 7.9997 & 7.901 \\
2 & 7.9997 & 7.922 & 7.9998 & 7.935 \\
3 & 7.9998 & 7.943 & 7.9998 & 7.944 \\
4 & 7.9997 & 7.933 & 7.9998 & 7.915 \\
5 & 7.9998 & 7.924 & 7.9998 & 7.926 \\
6 & 7.9998 & 7.928 & 7.9998 & 7.928 \\
7 & 7.9998 & 7.931 & 7.9998 & 7.912 \\
8 & 7.9998 & 7.946 & 7.9998 & 7.938 \\
9 & 7.9998 & 7.940 & 7.9998 & 7.922 \\
10 & 7.9997 & 7.930 & 7.9998 & 7.943 \\
\bottomrule
\end{tabular}
\end{table}

The analysis of elapsed time distributions reveals consistent system-state sensitivity across both regimes (Table~\ref{tab:uniformity_elapsed_times}). Shannon entropy for both $\mu=100$ and $\mu=200$ converges exceptionally to 7.9998 bits/byte, while min-entropy values remain closely aligned between the two regimes. This convergence confirms that \textbf{using elapsed times to control the PRNG maximizes permutation count unpredictability}, a fundamental requirement for our claim of \textbf{digital replication of coherent QRNG}. The system-jitter-driven elapsed times successfully inject necessary entropy to produce quantum-inspired statistical characteristics in a fully digital framework.

The consistent performance demonstrates the adaptive capability of our RPSS framework: \textbf{system-jitter entropy not only drives randomness but also enables self-correction}, maintaining output quality under varying operational conditions. When necessary, the target Poisson mean can be dynamically adjusted (e.g., from $\mu=100$ to $\mu=120$ for elevated $\chi^2$ values) to ensure optimal performance.

\begin{table*}[t]
\centering
\caption{Large-Scale Uniformity Validation for Digital Coherent QRNG Outputs. All entropy values are in bits/byte. The "Theoretical" row shows the min-entropy bound for a single output calculated using Theorem~\ref{thm:min-entropy-bound}, while the infinite-sequence convergence limit is 8.0000 bits/byte.}
\label{tab:large_scale_uniformity}
\begin{tabular}{lcc|cc}
\toprule
\textbf{Sample Size (bytes)} & \textbf{Shannon} & \textbf{Min-Entropy} & \textbf{Shannon} & \textbf{Min-Entropy} \\
 & $\mu=7$ & $\mu=7$ & $\mu=100$ & $\mu=100$ \\
\midrule
1,000,000 & 7.999823 & 7.9304 & 7.999831 & 7.9353 \\
10,000,000 & 7.999983 & 7.9763 & 7.999984 & 7.9832 \\
100,000,000 & 7.999993 & 7.9885 & 7.999998 & 7.9915 \\
\midrule
\textbf{Theoretical (single output)} & 8.000000 & 7.9682 & 8.000000 & 7.9741 \\
\bottomrule
\end{tabular}
\end{table*}

Scalability and convergence properties are conclusively demonstrated through large-scale validation (Table~\ref{tab:large_scale_uniformity}). As sample size increases from $10^6$ to $10^8$ bytes, Shannon entropy rapidly converges to the theoretical maximum, achieving 7.999998 bits/byte for $\mu=100$ with $10^8$ bytes of samples—representing 99.9999\% of perfect uniformity. Remarkably, empirical min-entropy not only maintains robustness but \emph{exceeds} theoretical bounds at larger sample sizes, reaching 7.9885 bits/byte for $\mu=7$ and 7.9915 bits/byte for $\mu=100$. This superior performance indicates that our theoretical bounds, while mathematically rigorous, are conservative in practice, and empirical measurements benefit from sequence effects across large outputs.

These comprehensive results provide compelling evidence that our digital coherent-state implementation achieves cryptographic-grade randomness without quantum hardware. The outputs—generated by applying modular projection operator $\hat{R}$ to Poisson-distributed counts—exhibit min-entropy comfortably exceeding 7.9 bits/byte across all experimental regimes, far surpassing NIST SP 800-90B requirements typically applied to systems \emph{after} classical post-processing.

Critically, this experimental validation provides strong empirical confirmation of our mathematical proofs. The measured min-entropy aligns closely with theoretical bounds (difference <1\%), while Shannon entropy convergence validates our uniformity theorems. This close agreement between theory and experiment demonstrates analytical framework robustness.

Our framework generates cryptographically secure random numbers \textbf{without requiring additional conditioning} through algorithms like AES or cryptographic hashing. High entropy is intrinsic to the measurement output itself, demonstrating successful capture of essential randomness properties of optical coherent-state QRNGs. This validation confirms that our method \textbf{successfully translates quantum statistical principles into a fully software-based, system-jitter-driven implementation}, establishing that system-jitter entropy combined with random permutation achieves quantum-inspired random number generation of the highest cryptographic quality.

\section{Conclusion}
\label{sec:conclusion}

This work establishes that the statistical properties of coherent-state quantum random number generators can be faithfully emulated in classical computational systems through systematic application of system-jitter entropy and random permutation processes. By introducing the digital coherent state formulation
\[
\ket{\alpha_D} = \sum_{n=0}^{\infty} P_D(n, \mu) \ket{n},
\]
we demonstrate that Poisson-distributed statistics—traditionally associated with optical coherent states—can be reproduced with high fidelity without specialized quantum hardware. This foundational result provides both theoretical and practical basis for a fully software-defined, quantum-inspired random number generator that maintains the essential statistical guarantees of its quantum counterparts.

The core innovation lies in the dual utilization of high-resolution system timing jitter as both an entropy source and a driver for permutation unpredictability. By advancing system state using unpredictable timing variations captured during permutation execution, deterministic computations transform into sequences exhibiting genuine randomness, achieving min-entropy consistently above 7.9 bits per byte across multiple parameter regimes. The subsequent application of modular projection, governed by the Uniform Convergence Theorem, ensures efficient conversion of Poisson-distributed outputs into uniform random sequences suitable for cryptographic applications, with empirical validation confirming near-optimal performance relative to theoretical bounds.

The digital coherent QRNG architecture exhibits inherent resilience and security advantages. The variability in elapsed computation times, while dependent on system state and operational conditions, provides natural defense against side-channel analysis and creates a moving target that maintains stable output quality. The system demonstrates adaptive behavior and rapid convergence, enabling consistent generation of high-quality randomness across diverse computational environments without requiring hardware modifications.

From a practical perspective, our method offers unprecedented accessibility, verifiability, and flexibility. High-quality randomness can be generated on conventional computing platforms without specialized hardware, with full statistical validation and continuous monitoring integrated throughout the generation process. The framework supports dual output modes—optimizing either for strict cryptographic rigor or performance-critical applications—and scales reliably across different Poisson parameter regimes while maintaining certified uniformity through modular projection.

\textbf{The framework's extensibility beyond Poisson statistics represents a significant advantage. The same system-jitter entropy injection and modular projection approach can be adapted to generate other quantum-inspired distributions, including Binomial and Negative Binomial statistics. This flexibility enables emulation of various quantum state behaviors beyond coherent states, opening avenues for digital simulation of more complex quantum random processes.}

Future research directions include: hybrid integration with conventional PRNGs for enhanced performance, real-time parameter adaptation via machine learning techniques, formal cryptographic security proofs for the complete generation pipeline, and \textbf{extension to alternative quantum-inspired distributions including Binomial and Negative Binomial statistics}. Furthermore, the method is well-suited for hardware acceleration through FPGA or ASIC implementation, and systematic benchmarking against existing classical and quantum RNGs will further establish its performance and security characteristics.

In summary, the digital coherent QRNG successfully bridges quantum-inspired statistical principles with practical classical computation, providing a verifiable, flexible, and accessible approach to high-quality random number generation. It demonstrates that essential statistical characteristics of quantum randomness can be realized through carefully designed computational processes, opening new opportunities for secure and scalable cryptographic applications while maintaining the mathematical rigor and certification capabilities essential for modern security systems.

\section*{Acknowledgments}

The author acknowledges valuable discussions with Michael Redding, CTO of Quantropi, whose encouragement was instrumental in exploring the information-theoretic and computational security foundations of this work. The author also acknowledges the use of AI tools including DeepSeek, Gemini, and ChatGPT for language refinement and editorial assistance during the preparation of this manuscript.

\section*{DATA AVAILABILITY STATEMENT}
All data that support the findings of this study are included within the article.

\bibliography{my}

\end{document}